\documentclass{article}

\usepackage{PRIMEarxiv}

\usepackage[utf8]{inputenc} 
\usepackage[T1]{fontenc}    
\usepackage{hyperref}       
\usepackage{url}            
\usepackage{booktabs}       
\usepackage{amsfonts}       
\usepackage{nicefrac}       
\usepackage{microtype}      
\usepackage{lipsum}
\usepackage{fancyhdr}       
\usepackage{graphicx}       
\graphicspath{{media/}}     

\usepackage{amsmath}
\usepackage{amsthm}

\newtheorem{theorem}{Theorem}[section]
\newtheorem{corollary}{Corollary}[theorem]
\newtheorem{lemma}[theorem]{Lemma}
\newtheorem{remark}[theorem]{Remark}
\newtheorem{definition}[theorem]{Definition}

\pagestyle{fancy}
\thispagestyle{empty}
\rhead{ \textit{ }} 

\fancyhead[LO]{Ultrametric backbone as minimum spanning forest union}

\title{The ultrametric backbone is the union of all minimum spanning forests
}

\author{
  Jordan C. Rozum, Luis M. Rocha \\
  Department of Systems Science and Industrial Engineering \\
  Binghamton University (State University of New York) \\
  Binghamton, NY\\
  \texttt{jrozum@binghamton.edu} \\
}

\begin{document}
\maketitle

\begin{abstract}
Minimum spanning trees and forests are powerful sparsification techniques that remove cycles from weighted graphs to minimize total edge weight while preserving node reachability, with applications in computer science, network science, and graph theory. Despite their utility and ubiquity, they have several limitations, including that they are only defined for undirected networks, they significantly alter dynamics on networks, and they do not generally preserve important network features such as shortest distances, shortest path distribution, and community structure. In contrast, distance backbones, which are subgraphs formed by all edges that obey a generalized triangle inequality, are well defined in directed and undirected graphs and preserve those and other important network features. The backbone of a graph is defined with respect to a specified path-length operator that aggregates weights along a path to define its length, thereby associating a cost to indirect connections. The backbone is the union of all shortest paths between each pair of nodes according to the specified operator. One such operator, the $\max$ function, computes the length of a path as the largest weight of the edges that compose it (a weakest link criterion). It is the only operator that yields an algebraic structure for computing shortest paths that is consistent with De Morgan's laws. Applying this operator yields the ultrametric backbone of a graph in that (semi-triangular) edges whose weights are larger than the length of an indirect path connecting the same nodes (i.e., those that break the generalized triangle inequality based on $\max$ as a path-length operator) are removed. We show that the ultrametric backbone is the union of minimum spanning forests in undirected graphs and provides a new generalization of minimum spanning trees to directed graphs that, unlike minimum equivalent graphs and minimum spanning arborescences, preserves all $\max-\min$ shortest paths and De Morgan’s law consistency.
\end{abstract}


\section{Introduction and background}

Many problems in network science and graph theory, such as predicting links ~\cite{rocha_recommender_2003}, optimizing traversal~\cite{van_mieghem_weight_structure_2005,simas_distance_2021}, identifying primary transmission modes in spreading dynamics~\cite{correia_metric_2023}, locating central (or redundant) nodes and edges~\cite{simas_distance_2021}, defining community structure~\cite{correia_metric_2023}, or predicting the size of cascades 
when nodes or edges are attacked, depend strongly on the structure of shortest paths~\cite{van_mieghem_phase_transition_2005}.
Often the length of a path is computed as the sum of its edge weights but the underlying system or process may suggest other choices, such as multiplying the edge weights or taking only the largest edge weight. 
The method of aggregating edge weights determines the distances between nodes and which paths are shortest in the context of a specific optimization problem. The aggregation operation encodes the \textit{cost} of indirect associations or interactions.
Moreover, other methods beyond shortest paths, such as diffusion and resistance distances, are possible to aggregate indirect associations in networks \cite{estrada2010,silver2018tuned}.
A unifying framework to study families of algebraically consistent edge weighting and path aggregation that quantifies node-to-node distance in weighted graphs is provided by the \textit{distance closure} \cite{simas_distance_2015} (see appendix for details). 

The distance closure applies to distance-weighted graphs $G=(X,D)$ with node set $X$ and edge set $D$. In $G$, $d_{i,j}$ takes values in the extended real numbers $[0,\infty]$ and denotes the weight of an edge $(x_i,x_j)$ from $x_i\in X$ to $x_j\in X$ when it is finite, and the absence of an edge when it is infinite. When it is clear from context, $D=D(X,X)$ may also refer to the weighted adjacency matrix with entries $d_{i,j}$ (where non-edges take on the value $\infty$).
Similarly, when context allows, $d_{i,j}$ may refer to the edge $(x_i,x_j)$ itself (when $d_{i,j}$ is finite) or to the absence of such an edge (when $d_{i,j}=\infty$).
\footnote{We make no distinction between an edge of weight $\infty$ and a non-edge. In particular, this means that all edges have finite weight, and thus a path cannot contain edges with infinite weight. Similarly, the weight of a graph, which refers to the sum of its edges, is the sum over only the finite entries of its adjacency matrix. Removing an edge $(x_i,x_j)$ is equivalent to setting $d_{i,j}=\infty$, and similarly, adding an edge $(x_i,x_j)$ is equivalent to setting $d_{i,j}$ to a finite (non-negative) value. Comparisons or binary operations on edges are to be understood as operating on the edge weights.}

In this framework, applying triangular metric space operations \cite{Menger,Klir1995} leads to general algebraic definitions of network distances including shortest path distance, diffusion distance, and resistance as the closure of an algebraic structure \cite{simas_distance_2015}. 
Specifically, one defines a non-decreasing, commutative, and associative operation $\otimes:[0,\infty]\times [0,\infty]\rightarrow[0,\infty]$ with identity $0$ for accumulating edge weights along a path to determine path length (e.g., the sum of edge weights), and a decreasing, commutative, and associative operation $\oplus:[0,\infty]\times [0,\infty]\rightarrow[0,\infty]$ with identity $\infty$ for aggregating path lengths to determine the distance between nodes (e.g., the minimum of path lengths). These two operations form monoids on $[0,\infty]$, and as a pair $\mathcal{D}=(\oplus,\otimes,[0,\infty])$ they determine how node-to-node distances are computed. The \emph{distance closure} with respect to $\mathcal{D}$ is $G^{\mathcal{D}}=(X,D^{\mathcal{D}})$ with edge weights $d^\mathcal{D}_{i,j}$ given by the distances between nodes as determined by the operators of $\mathcal{D}$ (and infinite between the nodes where no path exists in $G$). 

As a concrete example, the traditional distance closure used in network science is formed using $\oplus\equiv\min$ and $\otimes\equiv+$ (i.e., $\mathcal{D}=(\min,+,[0,\infty])$), so that the length of a path is the sum of its edge weights, and the distance from node $x_i$ to node $x_j$ is the smallest length among all paths from $x_i$ to $x_j$. In this case, the closure $G^\mathcal{D}$ is the traditional graph of node-to-node distances.

In the usual case of shortest-path distances ($\oplus\equiv\min$), the distance between nodes is the length of the shortest path between them for some choice of edge-weight aggregation operator. In this case,
a \textit{generalized triangle inequality} 
provides a transitivity criterion that separates edges into two categories: triangular edges that obey it, and semi-triangular edges that do not.
Triangular edges are necessary and sufficient to compute all shortest paths, while semi-triangular edges are redundant for this purpose. The shortest distance between two nodes connected by a semi-triangular edge is not along the direct path; a shortcut via an indirect path composed of triangular edges must exist.
The subgraph composed of only the triangular edges, the \textit{distance backbone}, thus captures all shortest-path phenomena on the network \cite{simas_distance_2021, correia_metric_2023, costa_directed_2023}.
Moreover, the distance backbone of a weighted (directed or undirected) graph with positive edge weights is the smallest subgraph that preserves all paths of minimal length between every pair of nodes \cite{rocha_recommender_2003,simas_distance_2021}
\footnote{Triangular edges are sufficient to compute all shortest distances, but when there are multiple paths of equal length between a pair of nodes, it is possible to remove some backbone edges or paths and still preserve all shortest distances (but not all shortest paths) in what is known as a transitive reduction \cite{simas_distance_2021}. To preserve all possible paths of minimal length, the entire distance backbone is strictly necessary.}. 

Because there are various possibilities for how to aggregate edge weights to compute a path length, there are various corresponding distance backbones \cite{simas_distance_2015,simas_distance_2021,costa_directed_2023}. 
The most straightforward of these computes path length as the sum of edge weights. 
In the distance closure framework, it is called the \emph{metric} backbone because it derives from the standard triangle inequality of metric spaces \cite{rocha_recommender_2003,correia_metric_2023}: an edge of weight
$d_{i,j}$ between nodes $x_i$ and $x_j$ is removed in the graph's metric backbone if and only if it violates the standard triangle inequality for some intermediary node $x_k$, that is, if and only if $d_{i,j} > d^\mathcal{D}_{i,k}+d^\mathcal{D}_{k,j}$, where $d^\mathcal{D}_{i,k}$ and $d^\mathcal{D}_{k,j}$ are edge weights in the distance closure with respect to $\mathcal{D}=(\min,+,[0,\infty])$ (cf. equation 3.2 of \cite{simas_distance_2021}).

The metric backbone as a sparsification technique has been shown to preserve spreading dynamics and community structure in various application domains including social networks \cite{correia_metric_2023}, brain networks \cite{SimasPhd2012, suckling2015winding,simas2015semi}, protein-protein interaction networks in disease \cite{correia2022conserved}, epidemic spread \cite{correia_metric_2023}, and many others \cite{simas_distance_2021}. 
The backbone generalizes the union of all shortest paths graph (which has the same graph structure as the metric backbone) studied in~\cite{van_mieghem_weight_structure_2005,van_mieghem_phase_transition_2005,van_mieghem_observable_2009} and provides additional algebraic grounding for redundancy analysis in networks.
Indeed, the backbone framework, as discussed above, introduces the concept of semi-triangular edges \cite{rocha_recommender_2003} that is useful in characterizing the redundancy and robustness of shortest paths \cite{simas_distance_2021}. Semi-triangular edges can be ranked by their \textit{distortion}, or the ratio between the direct and shortest paths between the nodes they connect. These edges and their properties are useful for recommendation and link prediction \cite{simas_distance_2015} and can be used to improve epidemic spread predictions \cite{panos_semi-metric_2023}.
Furthermore, by framing shortest-path distances as a result of generalized triangle inequalities in metric spaces, the methodology of distance closures and backbones provides a unifying framework for studying different ways to compute distances on networks, that is, ways to quantify indirect multivariate associations, such as the $\max$ measure of path length, considered next.

Aggregating edge weights along a path using the $\max$ function instead of summation results in the sparser \textit{ultrametric backbone} \cite{simas_distance_2021}. In this case, a path's length is determined solely by the weight of its most costly edge, and the shortest distance from one node to another is, as always, given by the smallest length of all the paths that connect them~\cite{simas_distance_2021,costa_directed_2023}. Thus the ultrametric backbone removes an edge if its endpoints are connected by a path composed of smaller edges.

Following \cite{simas_distance_2015,costa_directed_2023}, we define the ultrametric backbone in Definition \ref{def:UMB}. The general formulation of distance backbones, of which this is a special case, is provided in the appendix.

\begin{definition}[ultrametric backbone]\label{def:UMB}
    The \emph{ultrametric backbone} $U=(X,B^{\max})$ of a distance graph $G=(X,D)$ is the subgraph formed by the edges of $G$ that have invariant weight under ultrametric closure $G^\mathcal{U}$ where $\mathcal{U}=(\min,\max,[0,\infty])$.
\end{definition}

From the definition of the ultrametric closure \cite{simas_distance_2015}, the edges retained in the ultrametric backbone are precisely the edges $(x_i,x_j)$ satisfying $d_{i,j}=d^\mathcal{U}_{i,j}$ where \
    \begin{equation}
    d^\mathcal{U}_{i,j}=\min_{\substack{\pi\text{ is a path from}\\ \text{$x_i$ to $x_j$ in } G_D}}\quad\max_{(x_k,x_l)\in\pi} d_{k,l}.
    \end{equation} 

In other words, the ultrametric backbone $U=(X,B^{\max})$ of $G=(X,D)$ is the subgraph obtained by removing an edge $d_{i,j}$ from $G$ (i.e., setting $b^{\text{max}}_{i,j}=\infty$) if and only if there exists a path $\pi$ in $G$ from $x_i$ to $x_j$ in which every edge of $\pi$ has weight strictly less than $d_{i,j}$.
    
Because the maximum edge weight along a path is never more than the sum of edge weights along that path (assuming positive edge weights), a graph's ultrametric backbone is always a subgraph of its metric backbone \cite{costa_directed_2023}. Compared to the metric backbone, the ultrametric backbone places more emphasis on the highest cost edges in a path. As such, its most natural applications concern processes in which bottlenecks dominate. For example, in package delivery, the maximum number of items that can be delivered in one trip along a given route is determined by the leg of the journey on which the fewest number of items can be taken. In this example, the cost is inversely related to the maximum item capacity. Indeed it is often the case that transforming edge weights between a distance space (cost) and a proximity space (e.g., normalized item capacity) is conceptually and analytically beneficial\footnote{For instance, the minimum distance path length operator and the maximum distance edge weight operator in distance space correspond to the maximum proximity path length operator and minimum proximity edge weight operators in proximity space, respectively. When using the maximum proximity path length operator, the minimum proximity edge weight operator is the only operator consistent with De Morgan's laws, meaning it is the only complementary operator that can be used to form a fuzzy logic\cite{simas_distance_2015}. Thus, in distance space, taking a path's length to be the maximum of its edge weights yields the only algebraic structure for computing shortest paths that admits a consistent notion of logical negation.}. We have formally described this connection in previous work \cite{simas_distance_2015,simas_distance_2021}, which we briefly review in the appendix.

It is important to emphasize that neither the metric backbone nor the ultrametric backbone of a connected undirected graph is equivalent to a minimum spanning tree (MST), which is a minimum-weight subgraph that connects all vertices. This is easily verified by considering the complete graph on three vertices with equal edge weights. In this case, both the metric and ultrametric backbones are equal to the original graph, and hence are not trees. In contrast, any MST of this graph is a two-path. In the directed case, MSTs are not defined, though two prominent generalizations exist: minimum equivalent graphs, and minimum spanning arborescences (see, e.g., \cite{aho_transitive_1972,hsu_algorithm_1975,bang-jensen_minimum_2008,camerini_min-max_1978,korte_spanning_2018}). A minimum equivalent graph is a reachability-preserving subgraph of minimum (summed) weight, and a minimum spanning arborescence (at a root node $x_r$) is a directed acyclic graph rooted at $x_r$ that preserves reachability from $x_r$ to all other nodes (except $x_r$ itself) and is of minimal weight\footnote{We note that here, reachability refers to whether or not a path exists between two nodes and does not consider the length of that path. Thus, a subgraph may preserve reachability but not distances.}. Neither subgraph is equal to the metric or ultrametric backbone, in general, as is easily verified by, again, considering the complete directed graph on three vertices with equal edge weights.

Nevertheless, there are similarities between the concept of a graph backbone and various minimum spanning subgraphs. The results presented here formalizes the connection between the ultrametric backbone and MSTs in undirected graphs. Further, we demonstrate that the ultrametric backbone provides a distinct approach to generalizing the concept of an MST to directed graphs.

\section{Results}
\subsection{Undirected graphs}
The main result of this section is that the ultrametric backbone of a positively edge-weighted undirected graph is the union of all minimal spanning forests. This result follows immediately from the special case of connected graphs, which is formalized in Theorem \ref{res:U_is_MST_union}. This theorem relies on a few Lemmas, the first of which is a well-known property of MSTs.
\begin{lemma}[Cycle Property of MSTs]\label{res:cycle_property}
    Let $G$ be a connected, distance-weighted, undirected graph. For any cycle $\sigma$ in $G$, if $\sigma$ has an edge $d_{i,j}$ larger than all other edges in $\sigma$, then $d_{i,j}$ is not part of any MST of $G$.
\end{lemma}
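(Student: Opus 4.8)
The plan is to argue by contradiction using the classical exchange (cut-and-paste) trick. Suppose $d_{i,j}$ lies in some MST $T$ of $G$. Deleting the edge $(x_i,x_j)$ from the tree $T$ disconnects it into exactly two components $T_1$ and $T_2$, with $x_i\in V(T_1)$ and $x_j\in V(T_2)$; their vertex sets partition $X$ and so define a cut separating $x_i$ from $x_j$. The idea is to find, inside the cycle $\sigma$, a strictly cheaper edge that also crosses this cut, and swap it in.

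To produce that edge, I would write $\sigma$ as the edge $(x_i,x_j)$ together with a path $\pi$ from $x_i$ to $x_j$ that avoids $(x_i,x_j)$. Because $\pi$ begins in $V(T_1)$ and ends in $V(T_2)$, traversing it must at some step move from a vertex of $V(T_1)$ to a vertex of $V(T_2)$; the corresponding edge $e=(x_k,x_l)$ is an edge of $G$ (since $\pi$ is a path in $G$), lies on $\sigma$, and is distinct from $(x_i,x_j)$. Hence by hypothesis $d_{k,l}<d_{i,j}$. Then $T'=(T\setminus\{(x_i,x_j)\})\cup\{e\}$ has $|E(T)|$ edges and is connected (adding $e$ re-joins $T_1$ and $T_2$), so it is again a spanning tree; its total weight is $w(T)-d_{i,j}+d_{k,l}<w(T)$, contradicting the minimality of $T$. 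Therefore $d_{i,j}$ belongs to no MST of $G$.

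This is a routine argument, so I do not expect a serious obstacle; the one step that deserves explicit justification is the claim that $\pi$ must cross the cut, which follows purely from the fact that its two endpoints lie in different parts of the partition induced by $T_1$ and $T_2$. I would also take care to use the strictness of the hypothesis (that $d_{i,j}$ is \emph{larger than all other edges} of $\sigma$), since it is exactly what makes the swap strictly decrease the total weight, and hence what lets us contradict the optimality of an arbitrary MST rather than just some particular one.
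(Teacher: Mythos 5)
Your proof is correct and follows essentially the same cut-and-paste exchange argument as the paper: delete $d_{i,j}$ from an assumed MST to create a cut, locate a strictly lighter edge of $\sigma$ crossing that cut, and swap it in to contradict minimality. The paper's proof is the same argument with the same decomposition; no meaningful differences.
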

Lemma \ref{res:cycle_property} is a standard result in introductory graph theory. A simple proof by contradiction is given here.
\begin{proof}[Proof of Lemma \ref{res:cycle_property}]
    Assume, by way of contradiction, that a cycle $\sigma$ of edges in $G=(X,D)$ has an edge $d_{i,j}$ larger than the weight of any other edge in $\sigma$ and that $T=(X,D_T)$ is an MST of $G$ in which the edge $(d_T)_{i,j}$ of $T$ corresponds to an edge $d_{i,j}$. Removing $d_{i,j}$ from $T$ yields two disconnected trees covering the node set $X$: $T_1=(X_1,D_1)$ and $T_2=(X_2,D_2)$ with $x_i$ in $X_1$ and $x_j$ in $X_2$. The path obtained from the cycle $\sigma$ by removing the edge $d_{i,j}$ connects $x_i$ and $x_j$ in $G$. Because this path connects $x_i\in X_1$ and $x_j\in X_2$, it must contain an edge $d_{k.l}$ that is strictly smaller than $d_{i,j}$ (by the contradiction assumption) with $x_k$ in $X_1$ and $x_l$ in $X_2$. Therefore, the graph $T^\prime$ on $X$ with edges $(d_{T^\prime})_{i,j}=\min \{(d_1)_{i,j},(d_2)_{i,j}\}$ for $(i,j)\neq (k,l)$ and $(d_{T^\prime})_{k,l}=d_{k,l}$ is a spanning tree of $G$ with weight strictly less than the weight of $T$, contradicting the assumption that $T$ is an MST.
\end{proof}

From this result and elementary facts about the ultrametric backbone, it follows that every MST is a subgraph of the ultrametric backbone, as stated in Lemma \ref{res:U_has_all_T}.

\begin{lemma}\label{res:U_has_all_T}
    Let $G=(X,D)$ be a connected, weighted, undirected graph with positive edge weights and ultrametric backbone $U=(X,B^{\max})$. Any MST $T$ is a subgraph of $U$.
\end{lemma}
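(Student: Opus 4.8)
The plan is to show that every edge of an MST $T$ satisfies the invariance condition of Definition \ref{def:UMB}, i.e.\ that it survives in the ultrametric backbone $U$. Recall the characterization stated just after Definition \ref{def:UMB}: an edge $(x_i,x_j)$ of $G$ is \emph{absent} from $U$ if and only if there is a path $\pi$ in $G$ from $x_i$ to $x_j$ all of whose edges have weight strictly less than $d_{i,j}$. So it suffices to prove that no such ``shortcut'' path exists for any edge of $T$, and then conclude that every edge of $T$ lies in $U$, hence $T$ is a subgraph of $U$.

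I would argue by contradiction. Fix an edge $(d_T)_{i,j}$ of $T$; since $T$ is a subgraph of $G$, this is an edge $d_{i,j}$ of $G$ of the same weight. Suppose there were a path $\pi$ from $x_i$ to $x_j$ in $G$ with every edge of $\pi$ of weight strictly below $d_{i,j}$. The edge $(x_i,x_j)$ itself has weight $d_{i,j}$, which is not strictly below $d_{i,j}$, so $(x_i,x_j)$ is not among the edges of $\pi$. Replacing $\pi$ by a simple sub-path if necessary (this only deletes edges, so the strict weight bound is preserved), we may take $\pi$ to be a simple $x_i$--$x_j$ path, and then $\pi$ together with the edge $(x_i,x_j)$ is a cycle $\sigma$ in $G$.

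In $\sigma$, the edge $d_{i,j}$ has weight strictly greater than that of every other edge, since all other edges lie on $\pi$. By the Cycle Property, Lemma \ref{res:cycle_property}, the edge $d_{i,j}$ therefore belongs to no MST of $G$, contradicting the fact that it is an edge of the MST $T$. Hence no shortcut path exists, every edge of $T$ is invariant under ultrametric closure, and $T\subseteq U$.

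The only point requiring a little care is the reduction to a simple cycle --- ensuring that appending the edge $(x_i,x_j)$ to the detour path genuinely yields a cycle to which Lemma \ref{res:cycle_property} applies. This is routine: any $x_i$--$x_j$ walk contains a simple $x_i$--$x_j$ path built from a subset of its edges, and $(x_i,x_j)$ cannot already lie on that path because its weight fails the strict inequality. With that observation in place, the result follows immediately from the Cycle Property and the backbone characterization, so I do not expect any substantial obstacle here.
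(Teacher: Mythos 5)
Your proof is correct and follows essentially the same route as the paper's: use the backbone characterization to obtain a path of strictly smaller edges forming a cycle with $d_{i,j}$, then apply the Cycle Property (Lemma \ref{res:cycle_property}) to conclude $d_{i,j}$ lies in no MST. The paper phrases this as the contrapositive (every edge removed in $U$ is removed in every MST) and leaves the simple-cycle reduction implicit; your version spells out that detail but is otherwise the same argument.
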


\begin{proof}[Proof of Lemma \ref{res:U_has_all_T}]
    It suffices to show that every edge removed in $U$ is also removed in any MST. 
    Consider $(i,j)$ such that $d_{i,j}<\infty$ is an edge (of $G$), but $b^{\max}_{i,j}=\infty$ is not an edge of $U$ (in the case $G=U$, the result holds trivially). 
    Because $b^{\max}_{i,j}=\infty$, it follows immediately from Definition \ref{def:UMB} that $d_{i,j}$  belongs to a cycle in $G$ whose other edges are strictly smaller. 
    By Lemma \ref{res:cycle_property} this edge does not belong to any MST.
\end{proof}

The counterpart to Lemma \ref{res:U_has_all_T} is Lemma \ref{res:no_extra_edges_in_U}, which states that every edge of the ultrametric backbone belongs to at least one MST. Essentially, the result follows from two facts. The first fact is that for any edge $d_{i,j}$ belonging to the ultrametric backbone and any path $\pi$ connecting $x_i$ to $x_j$ that does not contain $d_{i,j}$, there is at least one edge $d_{k,l}$ in $\pi$ with $d_{i,j}\leq d_{k,l}$. The second fact is is that $\pi$ and $d_{i,j}$ together form a cycle, so replacing $d_{k,l}$ with $d_{i,j}$ results in an alternate path connecting the nodes in $\pi$ (with an equal or lesser edge weight sum). Such an operation cannot introduce cycles if $\pi$ is contained in an MST because removal of any edge participating in the introduced cycles would produce a strictly lower-weight connected graph. The formalization of this reasoning gives rise to the proof of Lemma \ref{res:no_extra_edges_in_U}.

\begin{lemma}\label{res:no_extra_edges_in_U}
    Let $G=(X,D)$ be a connected, weighted, undirected graph with positive edge weights and ultrametric backbone $U=(X,B^{\max})$. For any edge $b^{\max}_{i,j}$ of the ultrametric backbone, there exists an MST of $G$ containing the corresponding edge $d_{i,j}$.
\end{lemma}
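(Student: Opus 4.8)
The plan is to start from an arbitrary MST $T$ of $G$ and, if $T$ does not already contain the edge $d_{i,j}$, convert it into another MST that does by a single edge exchange, in the same spirit as the swap used in the proof of Lemma \ref{res:cycle_property}. If $d_{i,j}$ happens to be an edge of $T$, there is nothing to prove, so assume otherwise.

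The first step is to produce the relevant cycle. Adding $d_{i,j}$ to the tree $T$ yields a graph with exactly one cycle $\sigma$, and $\sigma$ necessarily passes through $d_{i,j}$; deleting $d_{i,j}$ from $\sigma$ leaves a path $\pi$ in $T$ joining $x_i$ to $x_j$ that does not use the edge $d_{i,j}$. Here is where the backbone hypothesis enters: because $b^{\max}_{i,j}$ is an edge of $U$, the reformulated version of Definition \ref{def:UMB} rules out any path from $x_i$ to $x_j$ all of whose edges are strictly smaller than $d_{i,j}$. Hence $\pi$ must contain at least one edge $d_{k,l}$ with $d_{k,l}\geq d_{i,j}$.

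The second step is the exchange: let $T'$ be obtained from $T$ by deleting $d_{k,l}$ and inserting $d_{i,j}$. Since $d_{k,l}$ lies on the unique cycle $\sigma$ of $T\cup\{d_{i,j}\}$, standard reasoning shows $T'$ is again a spanning tree of $G$. Its total weight is that of $T$ minus $d_{k,l}$ plus $d_{i,j}$, hence at most the weight of $T$; minimality of $T$ forces equality, so in fact $d_{k,l}=d_{i,j}$ and $T'$ is itself an MST, now containing the edge $d_{i,j}$, which is what is claimed.

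I do not expect a serious obstacle. The only points requiring care are the two elementary tree facts invoked (a tree plus one edge has a unique cycle; deleting an edge of that cycle restores a spanning tree) and the precise use of the backbone property, which yields only the non-strict inequality $d_{k,l}\geq d_{i,j}$ — it is the subsequent weight-minimality argument that upgrades this to $d_{k,l}=d_{i,j}$. If an algorithmic phrasing is preferred, one can instead run Kruskal's algorithm with a tie-breaking rule that processes $d_{i,j}$ before all other edges of equal weight: when $d_{i,j}$ is examined, its endpoints cannot already lie in the same component, since a connecting path among the previously added edges would consist entirely of edges of weight strictly less than $d_{i,j}$, contradicting that $d_{i,j}$ is a backbone edge; Kruskal therefore adds $d_{i,j}$ and outputs an MST through it.
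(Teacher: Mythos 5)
Your main argument is essentially the same as the paper's: start from an arbitrary MST $T$, use the backbone property to find an edge $d_{k,l}\geq d_{i,j}$ on the unique $T$-path from $x_i$ to $x_j$, and perform the exchange to obtain an MST containing $d_{i,j}$, with minimality forcing $d_{k,l}=d_{i,j}$. The Kruskal-with-tie-breaking variant you sketch at the end is a valid alternative the paper does not mention, but your primary proof matches the paper's route.
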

\begin{proof}[Proof of Lemma \ref{res:no_extra_edges_in_U}]
    Consider an arbitrary MST $T=(X,D_T)$ of $G$ and an arbitrary edge $b^{\max}_{i,j}<\infty$ of the ultrametric backbone $U$ of $G$. If $(d_T)_{i,j}$ is an edge of $T$, then the Lemma holds in this case. Otherwise, we seek to construct an alternate MST $T^\prime$ that contains $(d_{T^\prime})_{i,j}$ as an edge. In this case, $x_i$ must be connected to $x_j$ in $T$ via a path $\pi$ that does not contain $d_{i,j}$. Because $b^{\max}_{i,j}$ is an edge of $U$, $\pi$ must not be composed of edges whose weights are all strictly smaller than $d_{i,j}$. That is, $\pi$ contains an edge $d_{k,l}$ with $d_{k,l}\geq d_{i,j}$. We construct the graph $T^\prime\equiv (X, D_{T^\prime})$ from $T$ by removing the edge between $x_k$ and $x_l$ and inserting the edge between $x_i$ and $x_j$. The altered graph $T^\prime$ remains connected because the path $\pi^\prime$ formed by replacing $d_{k,l}$ by $d_{i,j}$ in $\pi$ connects $x_k$ to $x_l$ in $T^\prime$. Furthermore, because $d_{k,l}\geq d_{i,j}$, the weight of $T^\prime$ is not larger than that of $T$. From this fact and the connectedness of $T^\prime$, it follows that $T^\prime$ must be a tree of weight equal to the weight of $T$ (i.e., $d_{k,l}=d_{i,j}$); otherwise $T^\prime$ provides a counterexample to the assumption that $T$ is an MST. Therefore, either $T$ or $T^\prime$ is an MST of $G$ with edge $d_{i,j}$.
\end{proof}

The main result of this section, which follows immediately from Lemmas \ref{res:U_has_all_T} and \ref{res:no_extra_edges_in_U}, is that the ultrametric backbone of a connected undirected graph with positive edge weights is the exactly equal to the (non-disjoint) union of that graph's MSTs. This result is formally stated as Theorem \ref{res:U_is_MST_union}.

\begin{theorem}\label{res:U_is_MST_union}
    Let $G=(X,D)$ be a connected, weighted, undirected graph with positive edge weights and ultrametric backbone $U=(X,B^{\max})$, and let $\mathcal{T}$ be the set of all MSTs of $G$. Then $U=\bigcup_{T\in \mathcal{T}}T$ is the (non-disjoint) graph union of all MSTs.
\end{theorem}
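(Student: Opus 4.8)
The plan is to prove the equality of graphs by comparing edge sets: both $U$ and $\bigcup_{T\in\mathcal{T}}T$ have vertex set $X$, so it suffices to show that an (unordered) vertex pair $(x_i,x_j)$ is an edge of $U$ if and only if it is an edge of some MST. Since $G$ is connected with positive finite edge weights, $\mathcal{T}$ is nonempty and every $T\in\mathcal{T}$ spans $X$, so the union $\bigcup_{T\in\mathcal{T}}T$ is a well-defined connected spanning subgraph of $G$. The two required inclusions are then exactly the content of the two preceding lemmas.

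First I would establish $\bigcup_{T\in\mathcal{T}}T\subseteq U$. By Lemma \ref{res:U_has_all_T}, every MST $T\in\mathcal{T}$ is a subgraph of $U$; a union of subgraphs of $U$ is again a subgraph of $U$, so every edge appearing in some MST also appears in $U$. Second I would establish $U\subseteq\bigcup_{T\in\mathcal{T}}T$: let $b^{\max}_{i,j}<\infty$ be an arbitrary edge of the ultrametric backbone. By Lemma \ref{res:no_extra_edges_in_U} there exists an MST of $G$ containing the corresponding edge $d_{i,j}$, hence $(x_i,x_j)$ is an edge of $\bigcup_{T\in\mathcal{T}}T$. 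Combining the two inclusions yields that $U$ and $\bigcup_{T\in\mathcal{T}}T$ have the same edge set, and therefore coincide as graphs.

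I do not expect a genuine obstacle here: the substantive work — the cycle property (Lemma \ref{res:cycle_property}), the containment of each MST in $U$ (Lemma \ref{res:U_has_all_T}), and the exchange argument producing, for each backbone edge, an MST that contains it (Lemma \ref{res:no_extra_edges_in_U}) — has already been carried out, so the theorem is a direct corollary. The only point worth stating with care is the identification of an edge with its endpoint pair across $G$, $U$, and the MSTs (consistent with the convention in the footnote), which is what lets the two lemmas read precisely as the two set inclusions above; once that is made explicit, no further argument is needed.
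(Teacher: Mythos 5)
Your proposal is correct and takes exactly the paper's approach: the paper states that Theorem \ref{res:U_is_MST_union} ``follows immediately from Lemmas \ref{res:U_has_all_T} and \ref{res:no_extra_edges_in_U},'' and your two-inclusion argument is the straightforward spelling-out of that claim, with the first inclusion from Lemma \ref{res:U_has_all_T} and the second from Lemma \ref{res:no_extra_edges_in_U}.
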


Theorem \ref{res:U_is_MST_union} extends in a straightforward way to unconnected graphs by considering minimum spanning forests (a minimum spanning forest of a graph $G$ consists of one MST from each component of $G$). Applying Theorem \ref{res:U_is_MST_union} to each component of a disconnected graph immediately gives rise to Corollary \ref{res:U_is_MSF_union}.
\begin{corollary}\label{res:U_is_MSF_union}
    Let $G$ be a weighted undirected graph with positive edge weights and ultrametric backbone $U$, and let $\mathcal{F}$ be the set of all minimum spanning forests of $G$. Then $U=\bigcup_{F\in \mathcal{F}}F$ is the (non-disjoint) graph union of all minimum spanning forests.
\end{corollary}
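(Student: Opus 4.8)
The plan is to deduce Corollary \ref{res:U_is_MSF_union} from Theorem \ref{res:U_is_MST_union} by showing that both sides of the claimed identity decompose over the connected components of $G$. I would write $G$ as the disjoint union of its connected components $G_1,\dots,G_c$, allowing a component to be a single isolated vertex carrying no edges. The first step is to establish that the ultrametric backbone respects this decomposition, i.e.\ that $U=\bigcup_{a=1}^{c}U_a$, where $U_a$ denotes the ultrametric backbone of $G_a$. This holds because every edge $d_{i,j}$ of $G$ lies in a unique component $G_a$, and every path in $G$ from $x_i$ to $x_j$ lies entirely within $G_a$; consequently the quantity $d^{\mathcal{U}}_{i,j}$ appearing in the characterization following Definition \ref{def:UMB} — and hence whether $d_{i,j}$ is retained — is unchanged whether computed in $G$ or in $G_a$. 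Edge-free components contribute nothing to $U$ or to any $U_a$ and can be ignored.

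The second step is the analogous, essentially definitional, decomposition for minimum spanning forests: a minimum spanning forest $F$ of $G$ is exactly a union $\bigcup_{a=1}^{c}T_a$ of one MST $T_a$ per component $G_a$, and the choices $T_a\in\mathcal{T}_a$ (the set of MSTs of $G_a$) are independent across $a$. Hence $\mathcal{F}=\{\,\bigcup_{a}T_a : T_a\in\mathcal{T}_a \text{ for each } a\,\}$, so that
\[
\bigcup_{F\in\mathcal{F}}F \;=\; \bigcup_{a=1}^{c}\;\bigcup_{T_a\in\mathcal{T}_a}T_a .
\]
Since each $G_a$ is connected with positive edge weights, Theorem \ref{res:U_is_MST_union} applies to it and gives $\bigcup_{T_a\in\mathcal{T}_a}T_a=U_a$. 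Substituting this and using the first step yields $\bigcup_{F\in\mathcal{F}}F=\bigcup_{a=1}^{c}U_a=U$, which is the claim. The degenerate case where $G$ has no edges at all is subsumed, since then both sides equal the edgeless graph on $X$.

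The only genuine content beyond bookkeeping is the component-wise decomposition of the backbone asserted in the first step, so that is where I would be most careful, in particular treating isolated vertices and edge-free components explicitly, since these are exactly the places where a hasty ``apply the theorem to each component'' could slip. Once that is in place, the remainder of the argument is a reindexing of unions together with a direct appeal to Theorem \ref{res:U_is_MST_union} on each component.
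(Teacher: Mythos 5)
Your proof is correct and follows the same approach the paper intends: apply Theorem~\ref{res:U_is_MST_union} to each connected component. The paper states this application as "immediate" and gives no further argument, while you carefully supply the two decomposition facts it implicitly relies on (that the ultrametric backbone and the set of minimum spanning forests both factor over components), so your writeup is a faithful, more detailed version of the paper's one-line proof.
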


Corollary \ref{res:U_is_MSF_union} is analogous to the fact that the metric backbone is the union of all shortest path trees as described in the introduction~\cite{van_mieghem_weight_structure_2005,simas_distance_2021}.

\subsection{Directed graphs}

This section demonstrates, by way of counterexample, that Theorem \ref{res:U_is_MST_union} does not generalize to the case of directed graphs, suggesting that the directed ultrametric backbone extends the concept of MSTs to directed graphs in a manner distinct from traditional constructions.

\begin{remark}\label{res:U_is_not_MSSS_union}
    There exists a weighted directed graph $G$ with positive edge weights and ultrametric backbone $U$ such that the union $\mathcal{S}$ of all minimum equivalent graphs of $G$ satisfies i) $\mathcal{S}$ lacks an edge belonging to $U$ and ii) $\mathcal{S}$ has an edge not in $U$.
\end{remark}
\begin{remark}\label{res:U_is_not_MSA_union}
    There exists a weighted directed graph $G$ with positive edge weights and ultrametric backbone $U$ such that the union $\mathcal{A}$ of all minimum spanning arborescences of $G$ satisfies i) that $\mathcal{A}$ lacks an edge belonging to $U$ and ii) that $\mathcal{A}$ has an edge not in $U$.
\end{remark}

These remarks are demonstrated in Figure \ref{fig:fig1}. The original graph, $G$, is depicted in Figure \ref{fig:fig1}a. The ultrametric backbone, $U$, of this graph is depicted in Figure \ref{fig:fig1}b. In this example, $G$ has a unique minimum equivalent graph, depicted in Figure \ref{fig:fig1}c. The edge $d_{2,3}=5$ is required to preserve $\max-\min$ shortest paths and thus is included in the ultrametric backbone. This edge, however, is absent in the minimum equivalent graph. On the other hand, the edge $d_{2,4}=6$ is present in the minimum equivalent graph but is redundant for $\max-\min$ shortest paths and therefore not present in the ultrametric backbone. Thus, the relationship between these two graphs is as described in Remark \ref{res:U_is_not_MSSS_union}. This same example network demonstrates the correctness of Remark \ref{res:U_is_not_MSA_union} as well, as shown in Figure \ref{fig:fig1}d, which depicts the minimum spanning arborescences rooted at each node. The edge $d_{2,3}$ is absent in all minimum spanning arborescences, but is required in the ultrametric backbone to preserve $\max-\min$ shortest paths; on the other hand, the edge $d_{2,4}$ is not in the ultrametric backbone (it is redundant for $\max-\min$ shortest paths), but it is present in two of the minimum spanning arborescences.

\begin{figure}
  \centering
  \includegraphics[width=1\textwidth]{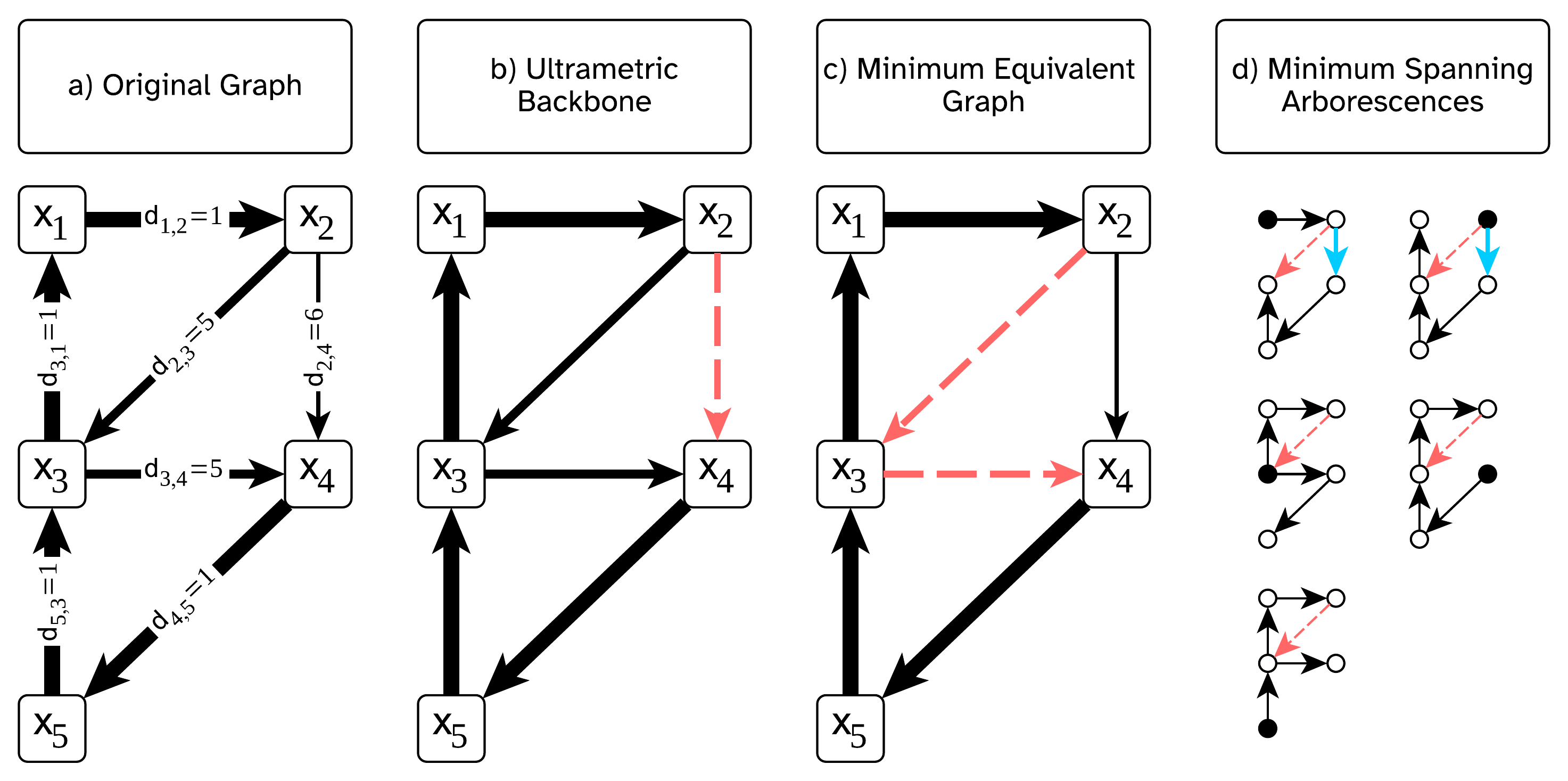}
  \caption{The ultrametric backbone is distinct from unions of MST analogs in directed graphs. (a) An example distance graph with thicker edges corresponding to smaller distance weights. (b) The ultrametric backbone is shown with edge weights omitted for visual clarity. Edge $d_{2,4}$ is removed, as indicated by the red dashes; it is redundant for $\max-\min$ shortest paths because it breaks the $\max-\min$ transitivity. (c) A minimum equivalent graph is shown, which in this example is unique. Note that it is distinct from the ultrametric backbone and does not preserve the shortest $\max-\min$ path from $x_2$ to $x_4$. (d) Five (in this case, unique) minimum spanning arborescences with the root node filled in with black are shown. The red dashed line indicates an edge, $d_{2,3}$, that is not in any minimum spanning arborescence, but is in the ultrametric backbone and required for $\max-\min$ shortest paths (its weight increases from $5$ to $6$). The blue edge, $d_{2,4}$, is present in the union of these five graphs, but is redundant for $\max-\min$ shortest paths and therefore is not in the ultrametric backbone.}
  \label{fig:fig1}
\end{figure}

A crucial property of the ultrametric backbone that gives rise to Lemma \ref{res:U_has_all_T} in the undirected case is that every edge removed in the undirected ultrametric backbone belongs to a cycle. Notably, this property does not hold in directed graphs, where the removed edges need not participate in a cycle. Rather, all that is required is that an alternate path exists between the parent and child nodes of the removed directed edge. The failure of this property to generalize, however, is not sufficient to explain the counterexample of Figure \ref{fig:fig1}: The edge $d_{2,3}$ participates in a cycle in which its weight is strictly larger than that of all others (i.e., $x_1\rightarrow x_2 \rightarrow x_3$). Rather, because $d_{2,3}$ is maximal in the cycle $x_1\rightarrow x_2 \rightarrow x_3$, this counterexample illustrates the failure of Lemma \ref{res:cycle_property} to generalize to minimum equivalent graphs and minimum spanning arborescences.

We note that the metric backbone is equivalent to the union of of all minimum spanning arborescences because it can be constructed as the union of all shortest paths \cite{costa_directed_2023}.

\section{Discussion}
The main result of this work, Theorem \ref{res:U_is_MST_union} and its corollary, is that the ultrametric backbone of any positively weighted undirected graph is the union of all MSTs (or forests, if the graph is not connected). This is surprising because the weight of a spanning tree is defined to be the sum of its edges, but the ultrametric backbone can be defined and computed without any summation. The result is all the more surprising because it does not generalize to natural analogs of MSTs for directed graphs. This suggests that the ultrametric backbone provides a new way to extend the concept of an MST to directed graphs. 

The ultrametric backbone may be especially useful when considering MSTs of graphs in which edge weights have relatively high uncertainty. By binning or coarse-graining edge weights and computing the ultrametric backbone, the ``true'' MST is guaranteed to be retained as a subgraph. Furthermore, potentially relevant edges that are marginally excluded in an MST are not forcibly discarded in a coarse-grained ultrametric backbone approach. This is because one is not forced to differentiate between edges with statistically equal weights when computing the ultrametric backbone.

The correspondence between the ultrametric backbone and the MSTs suggests an approach to finding MSTs that may offer computational advantages when edge summation is numerically difficult, for example when edge weights span many orders of magnitude or when differences between edge weights are small but significant. In particular, any MST of a graph is also an MST of that graph's ultrametric backbone, which may be computed without summing edges, using edge ranks only. Indeed, this property of MSTs is exploited in Kruskal's algorithm for finding minimum spanning forests~\cite{kruskal_shortest_1956}. In the special case when the MST is unique (for example, if all edge weights are distinct), it is equal to the ultrametric backbone. 

Theorem \ref{res:U_is_MST_union} emphasizes the similarities between the ultrametric backbone and MSTs in undirected weighted graphs. By extension, this comparison underscores the differences between other distance backbones (most notably the metric backbone) and MSTs. The metric backbone of an undirected graph necessarily contains its ultrametric backbone as a subgraph, which in turn contains all minimum spanning forests. Thus, the metric backbone is a subgraph that contains all MSTs, and which may, in general, contain additional edges in order to preserve all geodesics. In directed graphs, this difference becomes even more dramatically evident.

In directed graphs, the ultrametric backbone provides a natural extension of MSTs that is distinct from minimum equivalent graphs, minimum spanning arborescences, and their unions. The ultrametric backbone is computationally simple to compute and has the advantage of avoiding technical difficulties surrounding reachability or requirements of strong connectedness that other generalizations must contend with. Intuitively, it generalizes MSTs to directed graphs in a manner that emphasizes the removal of weakest links that occurs in the undirected case. The results presented here suggest that the ultrametric backbone (of a directed or undirected graph) may serve as an alternative or supplement to various minimum spanning subgraphs when analyzing network structure. This conclusion motivates further study regarding the dynamical properties of the ultrametric backbone in various contexts.

\section*{Acknowledgments}
This was was funded by the NIH National Library of Medicine Program grant 01LM011945-01 and the Fundação para
a Ciência e a Tecnologia grant 2022.09122.PTDC (https:
//doi.org/10.54499/2022.09122.PTDC) to LMR.

\section{Conflicts of Interest}
The authors declare no conflicts of interest.

\appendix
\section{General distance closure and backbone}\label{sec:GeneralFramework}
This section provides a summary of the distance closure framework of \cite{simas_distance_2015} and the distance backbone framework of \cite{costa_directed_2023}.

We begin with preliminary definitions from fuzzy logic, which motivated the work of \cite{simas_distance_2015}.

\begin{definition}[T-norm]
A \emph{triangular norm} (abbreviated T-norm) $\wedge:[0,1]\times[0,1]\rightarrow[0,1]$ is an associative, commutative, and non-decreasing binary operation on the closed unit interval that has identity $1$, i.e., $x\wedge 1 = x$ for all $x\in[0,1]$.
\end{definition}
\begin{definition}[T-conorm]
A \emph{triangular conorm} (abbreviated T-conorm) $\vee:[0,1]\times[0,1]\rightarrow[0,1]$ is an associative, commutative, and non-decreasing binary operation on the closed unit interval that has identity $0$, i.e., $x\vee 0 = x$ for all $x\in[0,1]$.
\end{definition}

If, for a T-norm $\wedge$ and T-conorm $\vee$, there exists a bijective involution $\neg$ on $[0,1]$ that satisfies the De Morgan property $a\vee b = \neg(\neg a\wedge\neg b)$ for all $a,b\in [0,1]$, then $\wedge$ and $\vee$ are dual and form a fuzzy logic.

Note that every T-norm and T-conorm (whether dual or not) form a monoid on $[0,1]$ (i.e., $(\wedge,[0,1])$ and $(\vee,[0,1])$ are monoids). Pairs of monoids on the same underlying set, and isomorphisms between these pairs, play an important role in \cite{simas_distance_2015}, so we define them carefully here:

\begin{definition}[monoid pair]
    A \emph{monoid pair} is an ordered pair of monoids that share the same underlying set, written $(*,+,M)$. An \emph{isomorphism of monoid pairs} from $(*,+,M)$ to $(\cdot,\star,N)$ is a function $\varphi:M\rightarrow N$ that is an isomorphism from $(*,M)$ to $(\cdot,N)$ and from $(+,M)$ to $(\star,N)$.
\end{definition}

Note that a monoid pair is quite general, as the two operations need not share anything in common other than their underlying set.

The special case of a monoid pair formed from a T-norm and T-conorm is of particular interest.
\begin{definition}[proximity structure]
    A monoid pair $(\wedge,\vee,M)$ with first operation ($\wedge$) a T-norm, second operation $\vee$ a T-conorm, and underlying set $M=[0,1]$ is called a \emph{proximity structure}. (This is called algebraic structure $I$ in \cite{simas_distance_2015}.)
\end{definition}

We now define analogous structures for distance spaces, rather than for proximity spaces.
\begin{definition}[TD-norm]
A \emph{triangular distance norm} (abbreviated TD-norm) $\oplus:[0,\infty]\times[0,\infty]\rightarrow [0,\infty]$ is an associative, commutative, and non-decreasing binary operation on the closed unit interval that has identity $0$, i.e., $x\oplus 0 = x$ for all $x\in[0,\infty]$.
\end{definition}
\begin{definition}[TD-conorm]
A \emph{triangular distance conorm} (abbreviated TD-conorm) $\otimes:[0,\infty]\times[0,\infty]\rightarrow[0,\infty]$ is an associative, commutative, and non-decreasing binary operation on the closed unit interval that has identity $\infty$, i.e., $x\otimes \infty = x$ for all $x\in[0,\infty]$.
\end{definition}
Note that in \cite{simas_distance_2015}, the functions $f$ and $g$ are introduced to define TD-norms and TD-conorms. In our notation, $f(a,b)=a\oplus b$ and $g(a,b)=a\otimes b$.
\begin{definition}[distance structure]
    A monoid pair $(\oplus,\otimes,M)$ with first operation $\oplus$ a TD-norm, second operation $\otimes$ a TD-conorm, and underlying set $M=[0,\infty]$ is called a \emph{distance structure}. (This is called algebraic structure $II$ in \cite{simas_distance_2015}.)
\end{definition}

In \cite{simas_distance_2015}, it is highlighted that for any proximity structure $\mathcal{P}$, any monotonically decreasing bijection $\varphi:[0,1]\rightarrow[0,\infty]$ can be interpreted as an isomorphism of monoid pairs, thereby inducing a corresponding distance structure $\mathcal{D}$. This allows for convenient extension of earlier results on proximity spaces and fuzzy logic relations (e.g., from \cite{Klir1995}) to distance graphs. In particular, \cite{simas_distance_2015} focuses on the construction of graph closures.

A proximity graph $G_P=(X,P)$ is a weighted graph with edge weights $p_{i,j}$ taken from $[0,1]$, while a distance graph $G_D$ is a weighted graph with edge weights $d_{i,j}$ taken from $[0,\infty]$. The proximity closure $G_P^{\mathcal{P}}$ of a proximity graph of $G_P$ with respect to a proximity structure $\mathcal{P}=(\wedge,\vee,[0,1]$ has the same node set and is a complete graph. The edge weight of the edge $(x_i,x_j)$ in $G_P^\mathcal{P}$ is denoted $p^\mathcal{P}_{i,j}$ given by 

\begin{equation}
    p^\mathcal{P}_{i,j}=\bigwedge_{\substack{\pi\text{ is a path from}\\ \text{$x_i$ to $x_j$ in } G_P}}\quad\bigvee_{(x_k,x_l)\in\pi} p_{k,l}.
\end{equation}
If no path exists between $x_i$ and $x_j$ in $G_P$, the weight of the corresponding edge is $0$ in $G_P^{\mathcal{P}}$.

By application of any monotonically decreasing bijection $\varphi:[0,1]\rightarrow[0,\infty]$, and imposition of the homomorphism property, an analogous structure, the distance closure $G_D^{\mathcal{D}}$ of the distance graph $G_D$ with respect to $\mathcal{D}=(\otimes,\oplus,[0,\infty])$ corresponding to $G_\mathcal{P}$ under $\varphi$ can be constructed. The edge weight of the edge $(x_i,x_j)$ in $G_D^{\mathcal{D}}$ is given by 

\begin{equation}
    d^\mathcal{D}_{i,j}=\bigoplus_{\substack{\pi\text{ is a path from}\\ \text{$x_i$ to $x_j$ in } G_D}}\quad\bigotimes_{(x_k,x_l)\in\pi} d_{k,l}.
\end{equation}
If no path exists between $x_i$ and $x_j$ in $G_\mathcal{D}$, the weight of the corresponding edge is $\infty$ in $G_D^{\mathcal{D}}$. 

As proved in \cite{simas_distance_2015} for the case of undirected graphs and in \cite{costa_directed_2023} for the case of directed graphs, the computation of the closure graph commutes with application of $\varphi$ to the edge weights.

In this framework, the distance backbone of a distance graph $G_D$ with respect to $\mathcal{D}$ is defined for $a\oplus b = \min(a,b)$ according to \cite{simas_distance_2021,correia_metric_2023,costa_directed_2023}. 
\begin{definition}[distance backbone]
    The \emph{distance backbone} of $G_D=(X,D)$ with respect to a distance structure $\mathcal{D}=(\min,\otimes,[0,\infty])$ is the subgraph $(X,B^{\otimes})$ formed by the edges of $G_D$ that have invariant weight under closure, i.e., edges $(x_i,x_j)$ satisfying $d_{i,j}=d^\mathcal{D}_{i,j}$.
\end{definition}
For example, in the case of the metric backbone, $a\oplus b=\min(a,b)$ and $a\otimes b=a+b$, while in the case of the ultrametric backbone, $a\oplus b=\min(a,b)$ and $a\otimes b=\max(a,b)$.

\section{Summary of notation used in main text}
In this appendix, we present Table \ref{tab:notation}, which summarizes the mathematical notation used in the main text.
\renewcommand{\arraystretch}{1.5}
\begin{table}[]
    \centering
    \begin{tabular}{c|p{12cm}}
    Symbol & Description \\\hline
    $G=(X,D)$ & A distance graph with positive edge weights, which may be directed or not, depending on context.\\
    $X$ & The vertex set of a graph.\\
    $x_i$ & A vertex in a distance graph.\\
    $D$ & The edges of a distance graph. When clear from context, $D=D(X,X)$ refers to the adjacency matrix of the graph, with non-edges represented by infinite entries.\\
    $d_{i,j}$ & An edge of a distance graph represented by its indexed value in the adjacency matrix. Writing $d_{i,j}=\infty$ is equivalent to noting that the graph does not have an edge between $x_i$ and $x_j$ (or from $x_i$ to $x_j$ in the directed case).\\
    $\pi$ & Indicates a path in a graph.\\
    $T$, $T^\prime$, etc. & Distance graphs; the letter $T$ is used to indicate that the graph has been shown (or will be shown) to be a tree.\\
    $\mathcal{T}$ & The set of all minimum spanning trees of a distance graph.\\
    $F$ & A distance graph; the letter $F$ is used to indicate that the graph is a minimum spanning forest.\\
    $\mathcal{F}$ & The set of all minimum spanning forests of a distance graph.\\
    $\mathcal{S}$ & The union of all minimum equivalent graphs of a directed distance graph.\\
    $\mathcal{A}$ & The union of all minimum spanning arboresceences (at all roots) of a directed distance graph.\\
    $\mathcal{D}$ & A distance structure; an algebraic structure that determines how distances between nodes in a distance graph are computed. It consists two monoid operations, $\oplus$ and $\otimes$, on the extended real numbers (see below).\\
    $\oplus$ & The TD-norm operator used to aggregates edge weights to compute a path length. $\oplus:[0,\infty]\times[0,\infty]\rightarrow[0,\infty]$ must be associative, commutative, and non-decreasing with identity element $0$.\\
    $\otimes$ & The TD-conorm operator used to aggregate path lengths to compute node-to-node distance. Here, set as $\otimes\equiv\min$ except when noted otherwise. $\otimes:[0,\infty]\times[0,\infty]\rightarrow[0,\infty]$ must be associative, commutative, and non-decreasing with identity element $\infty$.\\
    $\mathcal{U}$ & The ultrametric distance structure with $\oplus=\max$ and $\otimes=\min$ computes the distance between nodes as the length of the shortest path (or paths), where the length of a path is given by the weight of its largest edge.\\
    $G^\mathcal{D}$ & The distance closure of the graph $G$ with respect to the distance structure $\mathcal{D}$. Every (strongly) connected component of $G$ is a complete graph in $G^\mathcal{D}$ with an edge weight $d^\mathcal{D}_{i,j}$ equal to the distance between $x_i$ and $x_j$ in $G$ computed using $\mathcal{D}$.\\
    $U=(X,B^{\max})$ & The ultrametric backbone of a graph $G$. This graph consists of exactly those edges $d_{i,j}$ whose weight is conserved in the distance closure constructed using the distance structure $\mathcal{U}$ (see above). That is, its edges are shortest paths in $G$ where a path's length equals the weight of its largest edge. Edges are denoted using $b^{\max}_{i,j}$.\\\hline
    \end{tabular}
    \caption{Key notation used in the main text. Additional notation is used and defined in Appendix \ref{sec:GeneralFramework}.}
    \label{tab:notation}
\end{table}

\bibliographystyle{unsrt}  
\bibliography{references}

\end{document}